\newtheorem{theorem}{Theorem}[section]
\theoremstyle{definition}
\newcommand{\G}     {\mathbb{G}} 
\newcommand{\R}     {\mathbb{R}} 
\newcommand{\Z}     {\mathbb{Z}} 
\newcommand{\N}     {\mathbb{N}} 
\renewcommand{\P}   {\mathbb{P}}
\newcommand{\T}     {\mathbb{T}}
\numberwithin{equation}{section}
\title{Macroscopic  loops \\ in the $3d$  double-dimer model}
\author{Alexandra Quitmann, Lorenzo Taggi}
\begin{document}
\maketitle

\begin{abstract}
The double dimer model is defined as the superposition of two independent uniformly distributed dimer covers of a graph.  Its configurations can be viewed as disjoint collections of self-avoiding loops. 
Our first result is that  in $\mathbb{Z}^d$, $d>2$,
the loops in the double dimer model are macroscopic.
These are shown to behave 
qualitatively differently than in two dimensions.  In particular,
we show that, given two distant points of a large box, with uniformly positive probability there exists a loop visiting both points.
Our second result involves the  monomer double-dimer model, namely the double-dimer model in the presence of a density of monomers. These are vertices  which are not allowed to be touched by any loop.
This model depends on a parameter, the monomer activity, which 
controls the density of monomers. 
It is known from \cite{BetzTaggi,T} that a finite critical threshold of the monomer activity exists,  below which a self-avoiding walk forced through the system  is macroscopic. 
Our paper shows that,  when $d >2$, such a critical threshold is strictly positive.
In other words,   the  self-avoiding walk is macroscopic even in the presence of a positive density of monomers. 
\end{abstract}

\section{Introduction}
Dimer covers are  perfect matchings
of a graph, namely spanning sub-graphs such that every vertex has degree  one. 
These  mathematical objects attract 
interest from a wide range of perspectives, which include combinatorics,   probability,  statistical mechanics, and algorithm complexity studies. 
Our paper considers a random walk loop soup consisting  in the superposition of 
two independent uniformly distributed dimer covers. 
More precisely, we consider the double dimer model, which consists in the superposition
of two independent uniformly distributed dimer covers, 
and the monomer double-dimer model,  which corresponds to the double dimer model
in the presence of a density of monomers (a monomer is a vertex
which is not allowed to be touched by any loop).
These models are related to the loop O(N) model and to other random walk loop soups
which received increased interest in the last few years, 
see for example \cite{BenassiUeltschi, BetzUeltschi, DuminilCopinParafermionic,  L-T-exponential, PeledSpinka,QuitmannTaggi}
for some references.

Our first result involves the double dimer model.
The planar case was studied in \cite{Dubedat2, Kenyon5},
in which conformal invariance properties of the scaling limit were proved.
Relying on Kasteleyn's theorem \cite{Kasteleyn, TemperleyFisher}, the methods of these papers  do not apply  to higher dimension. Our first main theorem shows that the loops of the double dimer model in $\mathbb{Z}^d$, $d>2$, are macroscopic.  More precisely, 
we consider the double dimer model on a torus of $\mathbb{Z}^d$ of $L^d$ sites
and show that 
 \textbf{(1):} the expected length of each loop is of order $O(L^d)$ \textbf{(2):} 
 given two vertices
on the Cartesian axis having distance of order $O(L)$,  with uniformly positive probability a loop connects both.  Contrary to our high dimensional case, in the planar case this probability converges to zero as $L \rightarrow \infty$ \cite{Kenyon5}.

Secondly,  we introduce the monomer double-dimer model.
The model depends on a parameter, the \textit{monomer activity},  which controls the density of monomers, and reduces to the double dimer model when the monomer activity is zero. 
We consider a version of this model where one of the loops is forced to be `open' and then corresponds to a self-avoiding walk starting from the origin, which is allowed to end at an arbitrary vertex of the box.   It is known from \cite{BetzTaggi} that the length of the self-avoiding walk admits uniformly bounded exponential moments as the monomer activity is very large. 
It is known from \cite{T} that the self-avoiding walk is `long' if the monomer activity is zero and $d>2$, i.e, the distance between its two end-points grows as the diameter of the box. 
Our second main result, Theorem \ref{theo:theo2}, 
is that the  self-avoiding walk keeps being long even if the  monomer activity is strictly positive.  In other words,  the phase transition
in this model is non-trivial and occurs at   a strictly positive and finite threshold
of the monomer activity.  
Our results also hold for an extension of this model
in which the measure depends on a parameter $N$ rewarding the total number of loops in the system.

Our results are an extension of the method developed in  \cite{T}, 
in which the reflection positivity  technique
has been reformulated in the framework of random walk loop soups. 

\section{Definitions and results}
Consider a finite undirected graph $G = (V,E)$. 
A dimer configuration is a spanning sub-graph of $G$ such that every vertex has degree  one. 
We let $\mathcal{D}_G$ be the set of all dimer configurations in $G$. 
Given a set $A \subset V$, we let $G_A$ be the subgraph of $G$ with vertex set
$V \setminus A$ and with edge-set consisting of all the edges  in $E$ which 
do not touch any vertex in $A$.
We let $\mathcal{D}_G(A)$ be the set of dimer configurations in $G_A$. 
We let $\mathbb{T}_L =(V_L, E_L)$ be a graph corresponding to a torus with vertex set $V_L = (-\frac{L}{2},\frac{L}{2}]^d \cap \mathbb{Z}^d$ and with edges connecting nearest neighbour and boundary vertices.
We let $o \in V_L$ be the origin.

\paragraph{The double-dimer model.}
The double-dimer model is sampled by superimposing two independent dimer covers with uniform distribution on the set of possible dimer covers. 
Each realisation of the model can then be viewed as a collection of disjoint self-avoiding loops.

We let $P_{L}$ be the counting measure on the set $\mathcal{D}_{\mathbb{T}_L} \times \mathcal{D}_{\mathbb{T}_L}$ normalized by the total number of double dimer coverings and we denote by $E_L$ the corresponding expectation.  We note that to each pair  $(d_1,d_2) \in \mathcal{D}_{\mathbb{T}_L} \times \mathcal{D}_{\mathbb{T}_L}$ there corresponds a unique set of disjoint loops (some of these loops may correspond to the superposition of two dimers on the same edge).
Moreover, we note that to each such set of loops there correspond several double dimer configurations. 
We denote  by $\{o \leftrightarrow x\}$ the set of double dimer covers such that both $o, x \in V_L$ belong to the same loop. 
Further, we denote by $L_o =  L_o(d_1, d_2)$ the loop that contains the origin and by $|L_o|$ we denote its length, namely,
$
|L_o| := \sum_{x \in V_L} \mathbbm{1}_{\{x \in L_o\}}.
$

Our first theorem states that the expected length of the loop that contains the origin is of the same order of magnitude as the volume of the box and that the probability that a loop connects two vertices whose distance is of order $O(L)$ is uniformly positive. 
Moreover, our theorem also provides an upper bound on the expected loop length.
To state the theorem, let 
 $N_+=\sum_{n>0} \mathbbm{1}_{\{S_n=0\}}$ be the number of returns to the origin of an independent simple random walk $S_n$ in $\Z^d$ starting at the origin. Denoting its expectation by $E^d$, we set $r_d:=E^d[N_+]$. 
 \begin{theorem} \label{theo:theo1}
Suppose that $d>2$. Then
\begin{equation} \label{eq:theo1}
 \bigg(\frac{1}{2d} \Big(  1 - \frac{r_d}{2}   \Big)\bigg)^2  \, \leq \, \liminf\limits_{ \substack{L \rightarrow \infty \\ \mbox{ \footnotesize $L$ even } } } \, \, \frac{1}{| V_L |}  \, E_L\big[|L_o|\big] 
\, \leq \,  \frac{1}{2d} \, \Big(2-\frac{1}{2d}\Big).
\end{equation}
Moreover, for any $\varphi \in (0, \frac{1}{2d} (  1 - \frac{r_d}{2}   ))$, there exists $\varepsilon=\varepsilon(\rho,d) \in (0,\frac{1}{2})$ such that for any $L \in 2\N$ large enough and any odd integer $n \in (0, \varepsilon \, L)$,
\begin{equation} \label{eq:eq2theo1}
P_L(o \leftrightarrow n \, \boldsymbol{e}_1) \geq \varphi^2.
\end{equation}
\end{theorem}
An exact computation made by Watson \cite{Watson} shows that $0.51 < r_3 < 0.52$, and the Rayleigh monotonicity principle \cite{Lyons} implies that $r_d$ is non-increasing in $d$. This implies that the expression in the square on the left-hand side of (\ref{eq:theo1}) is uniformly positive for any $d >2$.
We refer to \cite{Joyce} for numerical estimates of $r_d$ for $d \geq 3$. 

 Our Theorem \ref{theo:theo1}  extends  \cite[Theorem 1.1]{QuitmannTaggi},
 which states  the occurrence of macroscopic loops for a very general random walk loop soup.
 That theorem, however, does not cover the double dimer model, since it holds
 only for random walk loop soups in which the vertices
 of the graph are allowed to be visited sufficiently many times by the loops.

\paragraph{The monomer double-dimer model.}
The monomer double-dimer model generalizes the double dimer model. The generalisation consists in  allowing the presence of a density of monomers.  These are controlled through an external parameter, the monomer activity.  
When the monomer activity is zero, the model reduces to the double-dimer model.

The configuration space of the monomer double-dimer model is denoted by $\Omega$ and it corresponds to the set of triplets $\omega = (M, d_1, d_2)$ such that
$M \subset V_L$ and $(d_1, d_2) \in \mathcal{D}_{\mathbb{T}_L}(M) \times \mathcal{D}_{\mathbb{T}_L}(M)$. We refer to the first element of the triplet $\omega$ as a set of monomers. 
We let $\mathcal{M} : \Omega \mapsto V_L$ be the random variable
corresponding to the set of monomers,  i.e,   
$\mathcal{M}(\omega) := M$ for each $\omega = (M, d_1, d_2) \in \Omega$.
 \begin{figure}[t]
\centering
\includegraphics[scale=1.1]{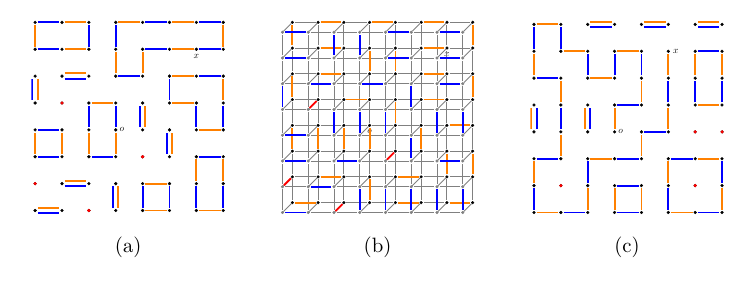}
\caption{(a) A configuration of the monomer double-dimer model.  (b) 
A dimer configuration on the duplicated torus. 
The red, orange and blue dimers represent the vertical dimers and the dimers of the lower and upper torus, respectively.  Projecting the  upper and lower torus onto the same torus leads to the configuration on the left. (c) A realisation $\omega =(M, d_1, d_2) \in \Omega(o,x)$. The set of monomers $M$ is represented by red circles. The dimers of $d_1$ and $d_2$ are coloured blue and orange, respectively. 
} 
    \label{fig:monomerdoubledimer}
\end{figure}
As one can see on the left of Figure  \ref{fig:monomerdoubledimer},
any such configuration can be viewed as a configuration of disjoint self-avoiding loops. As for the double dimer model, some of such loops may correspond to the superposition of two dimers
on the same edge.

Alternatively,  the configurations 
 can be viewed as dimer configurations
on a graph corresponding to two copies of a torus, 
with an edge connecting each vertex of one of the two tori to the corresponding
vertex of the other torus.
See also Figure \ref{fig:monomerdoubledimer}.
We refer to such a graph as \textit{duplicated torus} and to the edges connecting 
the two tori as  \textit{vertical edges.}
The representation is such that each monomer in the first representation corresponds to a dimer on a vertical edge in the second representation.
In this representation, the monomer activity then rewards the number of dimers on vertical edges.

A further alternative natural description of this model is to consider its configurations as permutations of the vertices of the graph such that each vertex is mapped either to itself or to a nearest neighbour, see \cite{T} for more details.

To state our second main result we need to introduce the  \textit{two point function}
of the  monomer double dimer model,  which is defined by 
placing two monomers at two vertices.
More precisely, 
given any two pair of vertices $x, y \in V_L$,  we define
$\Omega(x,y)$ as the set of triplets $\omega =(M, d_1, d_2)$ such that
$M \subset V_L \setminus \{x,y\}$, 
$d_1 \in \mathcal{D}_{ \mathbb{T}_L  }(M \cup \{x, y\})$, 
and $d_2 \in \mathcal{D}_{ \mathbb{T}_L  }(M)$
if $|x-y|$ is odd and 
$d_1 \in \mathcal{D}_{ \mathbb{T}_L  }(M \cup \{x\})$, 
$d_2 \in \mathcal{D}_{ \mathbb{T}_L  }(M \cup \{y\})$
if $|x-y|$ is even.  
As one can see on the right of Figure \ref{fig:monomerdoubledimer}, each element $\omega \in 
\Omega(x,y)$ can be viewed as a collection of disjoint self-avoiding 
loops with a self-avoiding walk starting at $x$ and ending at $y$.
For any $\omega \in \Omega(x,y)$, we let $\mathcal{L}(\omega)$ be the number
of loops in $\omega$ (we consider a loop also the object corresponding
to the superposition of two dimers on the same edge). 
We let $N \geq 0$, define  the partition functions,
\[
\mathbb{Z}_{L, N, \rho} (x,y) := \sum\limits_{\omega \in \Omega(x,y) } \rho^{|
\mathcal{M}(\omega)|} \, {\big (\frac{N}{2} \big )}^{  \mathcal{L}(\omega) },
\]
\[
\mathbb{Z}_{L, N, \rho} := \sum\limits_{\omega \in \Omega  } \rho^{|
\mathcal{M}(\omega)|} \,  {\big (\frac{N}{2} \big )}^{  \mathcal{L}(\omega) },
\]
and introduce the \textit{two-point function},
\begin{equation*}\label{eq:monomercorrelationdd}
\mathbb{G}_{L, N, \rho} (x,y) := \frac{\mathbb{Z}_{L, N,  \rho}(x,y)}{ \mathbb{Z}_{L, N, \rho} }.
\end{equation*}
Recalling the description of  the monomer-double dimer model as a dimer model on the duplicated torus, the two-point function can then be viewed as   the ratio  between the 
weight of all dimer configurations on the duplicated torus with  two monomers displaced at two vertices 
of such a graph and the weight of all configurations with no such monomers.  
When $\rho=0$ and $N=2$,  our two point function reduces to the monomer-monomer correlation of the \textit{dimer model} on the torus, 
\[
\mathbb{G}_{L, 2, 0} (x,y)  =  \frac{{|\mathcal{D}_{\mathbb{T}_L}(\{x,y\} })|}{|\mathcal{D}_{\mathbb{T}_L}|  },
\]
which is known to decay polynomially with the distance between the two
monomers on $\mathbb{Z}^2$  \cite{Dubedat2}
and to be uniformly positive on $\mathbb{Z}^d$ with  $d>2$ \cite{T}.
Our second main theorem states that the two-point function stays uniformly positive even for strictly positive values of the monomer activity (and even for integer values of $N$ different from two, which are not too large).
Since it is known from  \cite{BetzTaggi} that the two-point function decays exponentially fast when the monomer activity is large enough,
our result implies the occurrence of a phase transition at a strictly positive threshold
for the monomer activity when $d>2$.

An alternative formulation of our result involves the length of a self-avoiding walk forced through the system of loops.
Define the set $\Omega^w : = \cup_{x \in V_L} \Omega(o,x)$,
whose elements can be viewed as systems of mutually-disjoint self-avoiding loops
with a self-avoiding walk starting at the origin and ending at an arbitrary vertex of the box. 
Define 
a probability measure on this set, 
\[
\forall \omega \in \Omega^w \quad
\mathbb{P}_{L, N, \rho}(\omega)  := 
\frac{\rho^{|
\mathcal{M}(\omega)|} \, \big (\frac{N}{2} \big )^{  \mathcal{L}(\omega) }}{   \sum\limits_{\omega \in \Omega^w } \rho^{|
\mathcal{M}(\omega)|} \, \big (\frac{N}{2} \big )^{  \mathcal{L}(\omega) }   }.
\]
For any $\omega \in \Omega^w$,  let  $X = X(\omega)$ be the random end-point of the self-avoiding walk
which is not the origin
(if $\omega \in \Omega(o,o)$, then the self-avoiding walk is degenerate 
and consists of a single monomer at $o$, in this case we set $X= o$).
Our second main theorem, Theorem \ref{theo:theo2} below, states that,
if the monomer activity is small enough (possibly positive) and $N$ is a not too large integer, then the distance between the two end-points of the self-avoiding walk 
is of the same order of magnitude as the diameter of the box. 
Contrary to this, it is known from \cite{BetzTaggi, TaggiShifted} that 
the length of the walk is of order  $O(1)$ in the limit of large $L$ if the monomer activity is large enough. 
 \begin{theorem}\label{theo:theo2}
Suppose that $d  > 2$, that $N$ is an integer in $\big(0, \frac{2 \, (3\pi-4)}{\pi \, r_d}\big)$ and that $\rho \in \big[0, 1- \frac{\pi \, r_d \, N}{2 \, (3\pi-4)}\big)$.
Then,
\begin{equation}\label{eq:permpositivityaverage}
 \liminf\limits_{ \substack{L \rightarrow \infty \\ \mbox{ \footnotesize $L$ even } } } \, \, \frac{1}{| V_L |}  \sum_{x \in V_L}  \mathbb{G}_{L, N, \rho}(o,x)  \, 
 \geq \, \frac{1}{4d} \Big(  \frac{3\pi-4}{\pi N} \, (1-\rho)- \frac{r_d}{2}   \Big) =: C.
\end{equation}
Moreover, for any  $\varphi \in (0,  \frac{C}{2})$, there exists $\varepsilon >0$ such that for any $L \in 2\N$ large enough, any odd integer $n \in (-\varepsilon \, L, \varepsilon \, L)$ and any $i \in \{1,\dots,d\}$, 
\begin{equation} \label{eq:eq2theo2}
 \mathbb{G}_{L, N, \rho}(o,n \, \boldsymbol{e}_i) \geq \varphi.
\end{equation}
Finally,  under the same assumptions,  there exists $c>0$ such that 
for any $L \in 2 \mathbb{N}$,
\begin{equation} \label{eq:eq3theo2}
\mathbb{P}_{L, N, \rho} \big (  \, | X  |_1  \, > \,  c \, L  \,  \big )   > c.
\end{equation}
\end{theorem}
For $d=3$,  the right-hand side of \eqref{eq:permpositivityaverage} is strictly positive for any $N \in \{1,\dots,6\}$.
Our theorem extends Theorems 2.2 and 2.3 in \cite{T},  since 
our lower bounds (\ref{eq:permpositivityaverage}),
 (\ref{eq:eq2theo2}),  and (\ref{eq:eq3theo2})
 hold even for strictly positive values of the monomer activity.

\section{Proof of the theorems}
Our Theorem \ref{theo:theo1} follows from   \cite[Theorem 1]{T}
and   from an argument from C.  Kenyon \textit{et al.} in \cite{KenyonC}.
 Theorem 1 in \cite{T} shows that the monomer-monomer 
correlation of the dimer model  is uniformly positive. 
Combining this result with an argument from  \cite{KenyonC}, we
deduce that the expected loop length in the double dimer model is uniformly positive. By  using a general monotonicity property for   two-point functions
 which was proved in \cite{L-T-quantum}, we also deduce that the probability of existence of a loop connecting two given distant vertices on the Cartesian axis is also uniformly positive.

The starting point of the proof of our Theorem \ref{theo:theo2} is the lower bound
for the Ces\`{a}ro sum of the two-point function, equation  (\ref{eq:infrared}) below,
which was derived in \cite{T}
using the reflection positivity method. 
This inequality can be viewed as a version of the so-called \textit{Infrared Bound},
which is used in the framework of  spin systems \cite{Frohlich0}. 
For spin systems,  the uniform positivity of the Ces\`{a}ro sum 
follows immediately from the Infrared bound, since the term  $o-o$ of the two-point function
stays uniformly positive in the external parameter (the so-called inverse temperature) and the bound
gets better and better as the inverse temperature gets larger. 
This is not the case for monomer double-dimer model, in which the term $o-o$ of the two-point function
vanishes as the monomer activity goes to zero. 
Hence, we deal with an alternative version of the Infrared bound,
represented by equation  (\ref{eq:infrared}) below.
To obtain non-trivial information from such an inequality  one needs to show that the right-hand side of 
the inequality is uniformly positive.  For that, one needs to provide a uniform lower bound to the
third term in the right-hand side of  (\ref{eq:infrared}),
which involves a sum of two-point functions at the even sites
of the torus multiplied by positive and negative coefficients,  and compare it with the term $\mathbb{G}_{L, N, \rho}(o, \boldsymbol{e}_1)$, which is shown to be uniformly positive and non-decreasing with the monomer activity.
To control such a sum, the assumption $\rho=0$   was made in  \cite{T}.
Under this assumption  the  even two-point function is just zero, hence such a  sum vanishes. 
Here we use simple analytic methods to provide  a general  lower bound 
to this term which holds for positive values of the monomer activity and is finite (but negative) uniformly in the size of the torus. 
Combined with the uniform positivity of the term $\mathbb{G}_{L, N, \rho}(o, \boldsymbol{e}_1)$,
this allows us to deduce that the Ces\`{a}ro sum in the left-hand side is uniformly positive
even for positive values of the monomer activity. 

Our analysis can then be applied to other models 
for which an Infrared bound is available but, contrary to the classical case  \cite{Frohlich0}, the term
 $o-o$ of the two-point function is too  small
 to deduce long range order,
 while the term  $o-\boldsymbol{e_1}$
 is large.

\subsection{Proof of Theorem \ref{theo:theo1}}
We introduce the subsets of even and odd vertices of $V_L$, namely,
\begin{equation*}
V_L^e := \{ x \in V_L: \, d(o,x) \in 2\N_0\}, \qquad V_L^o := \{ x \in V_L: \, d(o,x) \in 2\N_0+1\},
\end{equation*}
where $d(o,x)$ is the graph distance in $\mathbb{T}_L$. For lighter notation, we will omit sub-scripts where appropriate.

The starting point of our proof of Theorem \ref{theo:theo1} is the following uniform positivity result for the monomer-monomer correlation of the dimer model. 

\begin{theorem}[{\cite[Theorem 2.1]{T}}] \label{theo:theoLorenzo}
Suppose that $d>2$. Then
\begin{equation} \label{eq:dimeruniform}
 \liminf\limits_{ \substack{L \rightarrow \infty \\ \mbox{ \footnotesize $L$ even } } } \, \, \frac{1}{| V_L^o |}  \sum_{x \in V_L^o} \frac{|\mathcal{D}(\{o,x\})|}{|\mathcal{D}(\emptyset)|} \geq \frac{1}{2d} \, \Big(1-\frac{r_d}{2}\Big).
\end{equation}
Moreover, for any $\varphi \in (0,\frac{1}{2d} (1-\frac{r_d}{2}))$, there exists a constant $c = c(\varphi, d) \in (0,\frac{1}{2})$ such that for any $L \in 2\N$ large enough and any odd integer $n \in (0, c \, L)$,
\begin{equation} \label{eq:dimeruniform2}
\frac{|\mathcal{D}(o, n \, \boldsymbol{e}_1)|}{|\mathcal{D}(\emptyset)|} \geq \varphi.
\end{equation}
\end{theorem}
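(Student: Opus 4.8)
The plan is to prove the averaged bound \eqref{eq:dimeruniform} from an exact \emph{energy sum rule} together with a reflection-positivity \emph{infrared bound}, and then to deduce the pointwise bound \eqref{eq:dimeruniform2} via monotonicity along a coordinate axis. Write $G_x:=|D(\{o,x\})|/|D(\emptyset)|=\mathbb{G}_{L,2,0}(o,x)$ and $S_L:=\frac{1}{|V_L^o|}\sum_{x\in V_L^o}G_x$. Since $\mathbb{T}_L$ is bipartite and $L$ is even, a cover of $\mathbb{T}_L\setminus\{o,x\}$ exists only for $x\in V_L^o$, so $G$ is supported on the odd sublattice. Set $\widehat{G}(p):=\sum_{x\in V_L}G_x\cos(p\cdot x)$ for $p\in\frac{2\pi}{L}\{0,\dots,L-1\}^d$ and $\widehat{\lambda}(p):=\frac1d\sum_{i=1}^d\cos p_i$; the support on $V_L^o$ forces the staggering $\widehat{G}(p+p^\ast)=-\widehat{G}(p)$ with $p^\ast:=(\pi,\dots,\pi)$, so that $\widehat{G}(0)=-\widehat{G}(p^\ast)=|V_L^o|\,S_L$. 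First I would establish the sum rule: in any cover the origin is matched to exactly one neighbour, and a cover with $o$ matched to $y\sim o$ is the same datum as the edge $\{o,y\}$ plus a cover of $\mathbb{T}_L\setminus\{o,y\}$, whence $\sum_{y\sim o}G_y=1$ and, by symmetry, $G_{\boldsymbol{e}_1}=\frac{1}{2d}$. Writing this through Parseval, the neighbour indicator having transform $2d\,\widehat{\lambda}$, gives $\frac{2d}{|V_L|}\sum_p\widehat{\lambda}(p)\widehat{G}(p)=1$; the modes $p=0$ and $p=p^\ast$ each contribute $d\,S_L$ (using $\widehat{\lambda}(0)=1$, $\widehat{\lambda}(p^\ast)=-1$, the two values of $\widehat{G}$ and $|V_L^o|=|V_L|/2$), so
\[
S_L=\frac{1}{2d}-\frac{1}{|V_L|}\sum_{p\neq 0,\,p^\ast}\widehat{\lambda}(p)\,\widehat{G}(p).
\]
The leading term $\frac{1}{2d}=G_{\boldsymbol{e}_1}$, forced by the matching constraint, plays the role of the on-site term, which vanishes here; supplying this replacement is the conceptual point.

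The analytic input I need is an infrared bound $\widehat{\lambda}(p)\widehat{G}(p)\le\frac{1}{4d}\,\frac{\widehat{\lambda}(p)^2}{1-\widehat{\lambda}(p)^2}$ for $p\neq 0,p^\ast$, which I would obtain by reflection positivity. Concretely I would show that the dimer measure on $\mathbb{T}_L$, in the loop-soup reformulation, is reflection positive under reflections through the hyperplanes of the torus, both through sites and through edges; from the resulting chessboard (Gaussian-domination) estimate for the generating functional coupling to the monomer field one extracts the bound on $\widehat{G}$, with the ``$1-\widehat{\lambda}^2$'' dispersion reflecting the bipartite structure. This is the step I expect to be the main obstacle, since reflection positivity must be verified for the dimer weights themselves rather than for a product measure.

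Granting the infrared bound, the non-condensate modes are controlled by a Riemann sum: as $L\to\infty$ through even values,
\[
\frac{1}{|V_L|}\sum_{p\neq 0,p^\ast}\widehat{\lambda}(p)\widehat{G}(p)\;\le\;\frac{1}{4d}\int_{[-\pi,\pi]^d}\frac{\widehat{\lambda}(p)^2}{1-\widehat{\lambda}(p)^2}\,\frac{dp}{(2\pi)^d}+o(1),
\]
the integral being finite precisely because $d>2$ makes the singularities at $0$ and $p^\ast$ integrable. Using $\frac{\widehat{\lambda}^2}{1-\widehat{\lambda}^2}=-1+\frac12\big(\frac{1}{1-\widehat{\lambda}}+\frac{1}{1+\widehat{\lambda}}\big)$, the identity $\int(1-\widehat{\lambda})^{-1}\frac{dp}{(2\pi)^d}=\sum_{n\ge 0}\P(S_n=o)=1+r_d$, and the symmetry $p\mapsto p+p^\ast$, this integral equals $r_d$. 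Substituting into the displayed identity gives $\liminf_L S_L\ge\frac{1}{2d}-\frac{r_d}{4d}=\frac{1}{2d}\big(1-\frac{r_d}{2}\big)$, which is \eqref{eq:dimeruniform}.

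Finally, to pass from the average to the pointwise bound \eqref{eq:dimeruniform2} I would invoke reflection-positivity (Messager--Miracle-Sol\'e) monotonicity: $n\mapsto G_{n\boldsymbol{e}_1}$ is non-increasing for odd $n$ with $0<n\le L/2$, and the on-axis correlation dominates $G_x$ over all $x\in V_L^o$ at the same $\ell^1$-distance. Together these bound $S_L$ above by a weighted sum of on-axis values; since a positive fraction of $V_L^o$ lies at $\ell^1$-distance at least $cL$, if $G_{n_0\boldsymbol{e}_1}<\varphi$ held for some odd $n_0<cL$, then monotonicity would force $G_{n\boldsymbol{e}_1}<\varphi$ for all $n\ge n_0$ and hence $S_L<\varphi+o(1)$, contradicting \eqref{eq:dimeruniform} as soon as $\varphi<\frac{1}{2d}(1-\frac{r_d}{2})$ and $c$ is small. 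The remaining work is only the bookkeeping that turns this into the uniform threshold $c=c(\varphi,d)$.
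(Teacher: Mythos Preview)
The paper does not prove this statement: it is quoted as \cite[Theorem~2.1]{T} and used as a black box in the proof of Theorem~\ref{theo:theo2}, so there is no proof here to compare your proposal against.

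For what it is worth, your outline is a faithful sketch of the argument in \cite{T}. The sum rule $\sum_{y\sim o}G_y=1$ (hence $G_{\boldsymbol e_1}=\tfrac{1}{2d}$) is exactly the substitute for the vanishing on-site term, and the reflection-positivity infrared bound together with the random-walk identity $\int(1-\widehat\lambda)^{-1}\,dp/(2\pi)^d=1+r_d$ produce the constant $\tfrac{1}{2d}(1-\tfrac{r_d}{2})$. The form of the infrared bound actually established in \cite{T} is the inequality recorded in the present paper as \eqref{eq:infrared}; specialising it to $\rho=0$, $N=2$ annihilates the third term on the right-hand side (the even two-point function vanishes on a bipartite torus with no monomers), and \eqref{eq:dimeruniform} follows at once from $\mathbb G_{L,2,0}(\boldsymbol e_1)=\tfrac{1}{2d}$ and \eqref{eq:limitofIL}. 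Your Fourier-side formulation with dispersion $\widehat\lambda^{\,2}/(1-\widehat\lambda^{\,2})$ is an equivalent repackaging of the same estimate. The pointwise bound \eqref{eq:dimeruniform2} is deduced in \cite{T} precisely via the axis-monotonicity you invoke, which in this paper is cited as \cite[Theorem~2.2]{L-T-quantum}.
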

In our proof we will first bound the number of double dimer covers containing a loop connecting $o$ and $x$ from below by $|\mathcal{D}(\{o,x\})|^2$ and then apply Theorem \ref{theo:theoLorenzo} to deduce uniform positivity of the expected loop length in the double dimer model.

For lighter notation, we will write $\mathcal{D}(A)^2$ for the set $\mathcal{D}(A) \times \mathcal{D}(A)$ for any $A \subset V_L$.

\begin{proof}[Proof of Theorem \ref{theo:theo1}]
To begin, we show that for any $x \in V_L^o$, it holds that
\begin{equation} \label{eq:monomermonomerloop}
| \mathcal{D}(\{o,x\})|^2 \leq |\mathcal{D}(\emptyset)^2: \, o \leftrightarrow x|.
\end{equation}

The upper bound \eqref{eq:monomermonomerloop} follows from the proof of \cite[Theorem 2]{KenyonC}. For comprehensiveness, we repeat the main argument given there. Note that if $x$ is adjacent to the origin, then \eqref{eq:monomermonomerloop} trivially holds true since for any pair of configurations $(d_1, d_2) \in \mathcal{D}(\{o,x\})^2 $ we can add precisely one dimer on the edge $\{o,x\}$ in each of the two configurations $d_1$ and $d_2$, thus obtaining a configuration in $\mathcal{D}(\emptyset)^2 $ which contains a loop consisting of two dimers on $\{o,x\}$. 
Suppose now that $x$ is not adjacent to the origin. We introduce the map $\phi: \mathcal{D}(\{o,x\}) \times \mathcal{D}(\{\boldsymbol{e}_1,x+\boldsymbol{e}_1\}) \to \{ \mathcal{D}(\emptyset)^2 : \, o \leftrightarrow x \}$, which for each configuration $(d_1,d_2) \in \mathcal{D}(\{o,x\}) \times \mathcal{D}(\{\boldsymbol{e}_1,x+\boldsymbol{e}_1\})$ acts by 
\begin{itemize} 
\item[(i)] colouring the dimers in $d_1$ orange and the dimers in $d_2$ blue,
\item[(ii)] superimposing both (coloured) configurations,
\item[(iii)] adding an orange dimer on $\{o, \boldsymbol{e}_1\}$ and on $\{x,x+\boldsymbol{e}_1\}$, by then switching the colours of the dimers along the path from $x+\boldsymbol{e}_1$ to $\boldsymbol{e}_1$ which does not touch $x$ from blue to orange and orange to blue, respectively, 
\item[(iv)] defining a new configuration $(d_1^\prime,d_2^\prime) \in \mathcal{D}(\emptyset)^2$, where $d_1^\prime$ consists of all orange and $d_2^\prime$ consists of all blue dimers.
\end{itemize}
Note that the configuration $(d_1^\prime,d_2^\prime)$ contains a loop which visits amongst others the set $\{o,\boldsymbol{e}_1,x,x+\boldsymbol{e}_1\}$. For an illustration see also \cite[Figure 2]{KenyonC}. The map $\phi$ is well-defined and an injection as shown in \cite{KenyonC}. Since $|\mathcal{D}(\{\boldsymbol{e}_1,x+\boldsymbol{e}_1\})| = |\mathcal{D}(\{o,x\})|$, this proves \eqref{eq:monomermonomerloop}.

Using the Cauchy-Schwarz inequality, we then have that
\begin{equation} \label{eq:CauchySchwarzapp}
\frac{1}{|V_L^o|} \, \sum_{x \in V_L^o} \frac{|\mathcal{D}(\{o,x\})|}{|\mathcal{D}(\emptyset)|} \leq \bigg(\frac{1}{|V_L^o|} \, \sum_{x \in V_L^o} \frac{|\mathcal{D}(\{o,x\})|^2}{|\mathcal{D}(\emptyset)|^2}\bigg)^\frac{1}{2} \leq \bigg(\frac{1}{|V_L^o|} \, \sum_{x \in V_L^o} P_L(o \leftrightarrow x) \bigg)^{\frac{1}{2}},
\end{equation}
where we used \eqref{eq:monomermonomerloop} in the last step. Since the number of even and odd vertices in each loop coincide we further have that
\begin{equation} \label{eq:expeclength}
\frac{1}{|V_L|} \, E_L\big[|L_o|\big] = \frac{1}{|V_L^o|} \, \sum_{x \in V_L^o} E_L\big[\mathbbm{1}_{\{x \in L_o\}}\big] = \frac{1}{|V_L^o|} \, \sum_{x \in V_L^o} P_L(o \leftrightarrow x).
\end{equation}
From \eqref{eq:dimeruniform}, \eqref{eq:CauchySchwarzapp}, \eqref{eq:expeclength}, and from the fact that $\frac{1}{2d} \, \big(1-\frac{r_d}{2}\big)>0$,  we deduce the lower bound in \eqref{eq:theo1}. The upper bound follows from \eqref{eq:expeclength} and from the site-monotonicity property,
\begin{equation} \label{eq:sitemonotonicityappl}
P_L(o \leftrightarrow x) \leq P_L(o \leftrightarrow \boldsymbol{e}_1) = \frac{1}{2d} \, \Big(2-\frac{1}{2d}\Big)
\end{equation}
for all $x \in V_L$. In \eqref{eq:sitemonotonicityappl} we applied \cite[Theorem 2.1]{L-T-quantum} noting that the quantity $P_L(o \leftrightarrow x)$ can be expressed in the language of the \textit{random path model} for specific choices of the parameters (see e.g. \cite{L-T-first, L-T-quantum, QuitmannTaggi}). The rightmost term in \eqref{eq:sitemonotonicityappl} corresponds to the probability that at least one of the two dimer configurations has a dimer on $\{o,\boldsymbol{e}_1\}$. 
From \eqref{eq:monomermonomerloop} and \eqref{eq:dimeruniform2} we further deduce \eqref{eq:eq2theo1}. 
This concludes the proof of the theorem.
\end{proof}

\subsection{Proof of Theorem \ref{theo:theo2}}
In this section we prove Theorem \ref{theo:theo2}. We will use the notation 
\begin{equation*}
\G_{L,N,\rho}(x) := \G_{L,N,\rho}(o,x)
\end{equation*}
for any $x \in V_L$. 
The proof of Theorem \ref{theo:theo2}  relies on the following theorem.

\begin{theorem}[{\cite[Theorem 5.1]{T}}]
For any   $d, N \in \mathbb{N}$, $L \in 2 \mathbb{N}$,
$\rho \in [0, \infty)$, 
we have that, 
\begin{equation}\label{eq:infrared}
\begin{aligned}
 \sum\limits_{\substack{x \in V_L }} \frac{\mathbb{G}_{L, N, \rho}(x) }{|V_L|} 
& \geq  \,  \frac{1}{2} \, \bigg(\mathbb{G}_{L, N, \rho}  (  \, \boldsymbol{e}_1  \,   ) \,  -  \,  \mathcal{I}_L(d)  \,   + \frac{2}{|V_L|} \, \sum\limits_{ \substack{ x  \in V^e_L \,  : \\ x_2 = \ldots  = x_d = 0}} \, \Upsilon_L(x) \mathbb{G}_{L, N, \rho}(x)\bigg), \\
\end{aligned}
\end{equation}
where $\big (\mathcal{I}_L(d) \big )_{L \in \mathbb{N}}$ is a  sequence of real numbers whose limit $L \rightarrow \infty$ exists and satisfies
\begin{equation}\label{eq:limitofIL}
\lim_{ L \rightarrow \infty   } \mathcal{I}_L(d) = \frac{r_d}{4d},
\end{equation}
and 
 $(\Upsilon_L)_{L \in \mathbb{N}}$  is a  sequence of  real-valued functions, defined by
\begin{equation*}
\begin{aligned}
\forall x \in \mathbb{Z}^d \quad  \quad  \Upsilon_L(x) &  : = \text{Re}\bigg(
\sum\limits_{ \substack{k \in V_L^*: \\ k_1 \in (-\frac{\pi}{2},\frac{\pi}{2}]}  }     e^{ -i k \cdot ( x - \boldsymbol{e}_1)}\bigg).
\end{aligned}
\end{equation*}
Here, $V_L^*:= \{ \frac{2\pi}{L} x: \, x \in V_L\}$ denotes the vertex set of the Fourier dual torus of $\T_L$. 
\end{theorem}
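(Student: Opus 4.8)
The statement to prove is a version of the \emph{infrared bound}, so the plan is to derive it from \emph{reflection positivity} of the monomer double-dimer measure. First I would pass to a representation of the two-point function in which reflection positivity is transparent: the dimer representation on the duplicated torus, equivalently the random-path representation of \cite{T, L-T-first}, where $\mathbb{G}_{L,N,\rho}(x)$ is a ratio of partition functions with a path forced to run between $o$ and $x$. The crucial structural features are that the weights are nearest-neighbour and that, since $L$ is even, the torus admits reflection hyperplanes perpendicular to each coordinate direction, through both vertices and edges. I would verify that the measure factorises across such a hyperplane into a product of reflected pieces, which is precisely the reflection-positivity condition.

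With reflection positivity in hand I would run the standard Gaussian-domination/chessboard argument (iterated Cauchy--Schwarz across the reflection planes, as in \cite{Frohlich0}). Writing $\hat{\mathbb{G}}(k):=\sum_{x\in V_L}\mathbb{G}_{L,N,\rho}(x)\,e^{-ik\cdot x}$ for the Fourier transform on the dual torus, this produces two outputs: (a) positivity $\hat{\mathbb{G}}(k)\ge 0$ for every $k\in V_L^*$; and (b) the infrared bound $\hat{\mathbb{G}}(k)\le \mathrm{const}/E(k)$ for $k\neq 0$, where $E(k)=\sum_{j=1}^d(1-\cos k_j)$ is the dispersion relation and the constant is controlled by the uniformly bounded nearest-neighbour energy of the model. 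Combining reflections in all $d$ directions, rather than a single one, is what promotes the single-direction factor $1-\cos k_1$ to the full $E(k)$ in the denominator, which is exactly what makes the subsequent lattice sums summable when $d>2$.

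Next I would convert these Fourier estimates into the stated inequality via a sum rule. By Fourier inversion the left-hand side $\frac{1}{|V_L|}\sum_x\mathbb{G}(x)$ is exactly the $k=0$ mode $\hat{\mathbb{G}}(0)/|V_L|$, while $\mathbb{G}(\boldsymbol{e}_1)=\frac{1}{|V_L|}\sum_k\hat{\mathbb{G}}(k)\cos k_1$. Isolating the $k=0$ term gives an identity expressing $\hat{\mathbb{G}}(0)/|V_L|$ through $\mathbb{G}(\boldsymbol{e}_1)$ minus a $\cos k_1$-weighted sum over the nonzero modes, and the factor $\tfrac12$ emerges upon symmetrising this identity. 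I would then split the Brillouin zone at $\cos k_1=0$: on the half-zone $k_1\in(-\tfrac{\pi}{2},\tfrac{\pi}{2}]$, where $\cos k_1\ge 0$, I apply (b), and the resulting lattice sum is precisely $\mathcal{I}_L(d)$; on the complementary region I keep the contribution exactly and rewrite it by inverse Fourier transform. Because the transverse Fourier sums force $x_2=\dots=x_d=0$, this complementary contribution collapses onto the first coordinate axis and reassembles into $\frac{2}{|V_L|}\sum_{x}\Upsilon_L(x)\,\mathbb{G}(x)$, with $\Upsilon_L$ the half-zone kernel, while positivity (a) guarantees the discarded pieces carry the correct sign.

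Finally I would compute $\lim_{L\to\infty}\mathcal{I}_L(d)=r_d/(4d)$. The sum $\mathcal{I}_L(d)$ is a Riemann sum converging to the lattice Green's-function integral $\frac{1}{4d}\int_{[-\pi,\pi]^d}\frac{\cos k_1}{1-\hat p(k)}\,\frac{dk}{(2\pi)^d}$, where $\hat p(k)=\frac1d\sum_j\cos k_j$; by coordinate symmetry $\int\frac{\cos k_1}{1-\hat p}=\int\frac{\hat p}{1-\hat p}$, and the random-walk identity $r_d=\sum_{n>0}P(S_n=0)=\int\frac{\hat p}{1-\hat p}\,\frac{dk}{(2\pi)^d}$ closes the evaluation. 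Finiteness of this integral requires $1-\hat p(k)\sim|k|^2/(2d)$ near $k=0$ to be integrable, i.e. $d>2$, which is exactly the hypothesis. The hard part, I expect, is the first two steps: establishing reflection positivity for this loop/path measure and pinning down the exact constant in (b), since, unlike for spin systems, one must check that the path and loop weights factorise across the reflection plane; the Fourier bookkeeping of the third step, in particular routing the complement-zone terms into the explicit $\Upsilon_L$ correction rather than discarding them, is delicate but mechanical once (a) and (b) are in place.
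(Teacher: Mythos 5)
The paper you are working from never proves this statement: it is imported verbatim, with attribution, as \cite[Theorem 5.1]{T}, and is used as a black box in the proof of Theorem \ref{theo:theo1}. So the only possible comparison is with the derivation in \cite{T}. Your architecture --- reflection positivity of the random-path/duplicated-torus representation, Gaussian domination, a Fourier sum rule isolating the zero mode, a split of the dual torus at $|k_1|=\pi/2$ with the infrared bound applied only on the half-zone and the complementary half-zone kept exactly, and a Riemann-sum/random-walk computation giving $\lim_{L\to\infty}\mathcal{I}_L(d)=r_d/(4d)$ via $r_d=\int \hat p/(1-\hat p)$ --- is indeed that proof. The bookkeeping you call ``delicate but mechanical'' does close: the complement-zone kernel equals $|V_L|\,\delta_{x,\boldsymbol{e}_1}-\Upsilon_L(x)$, and $\Upsilon_L$ is supported on the first coordinate axis, vanishes at odd axis sites other than $\boldsymbol{e}_1$, and equals $|V_L|/2$ at $\boldsymbol{e}_1$ (for $L\in 4\mathbb{N}$); this is exactly what produces the overall factor $\tfrac12$ and restricts the $\Upsilon_L$-sum to even sites.

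The genuine flaw is your claimed output (a): positivity $\hat{\mathbb{G}}(k)\geq 0$ for all $k$ is not available for loop-soup/dimer two-point functions, and it is not an output of chessboard or Gaussian-domination arguments. In spin systems it comes for free from the representation $\hat{G}(k)=\mathbb{E}\big[|\hat\sigma(k)|^2\big]$; the monomer double-dimer two-point function admits no such representation, and the absence of precisely this positivity is why the theorem must carry the indefinite-sign $\Upsilon_L$ term at all --- and why the present paper's separate lower bound \eqref{eq:termpositive} is its main technical contribution (if one could discard Fourier modes by positivity, that whole difficulty would disappear). Fortunately your assembly does not actually need (a): keeping the complement half-zone exactly is an identity requiring no sign information, and the infrared bound is only ever applied where the weight $\cos k_1$ is non-negative. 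You should therefore delete (a) and the clause ``positivity (a) guarantees the discarded pieces carry the correct sign'': in a correct implementation nothing is discarded, and as written your proof leans on an assertion that cannot be proved for this model. A smaller inaccuracy of the same kind: what reflection positivity yields here is a bound on the edge-anchored transform $\sum_x\mathbb{G}_{L,N,\rho}(x)\cos\big(k\cdot(x-\boldsymbol{e}_1)\big)$ (note the anchoring $x-\boldsymbol{e}_1$ built into the definition of $\Upsilon_L$), not on $\hat{\mathbb{G}}(k)$ itself; this changes the exact definition of $\mathcal{I}_L(d)$ --- as written, your half-zone lattice sum and your full-zone limiting integral do not match --- though not the limiting value $r_d/(4d)$ asserted in the statement.
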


We will obtain \eqref{eq:permpositivityaverage} by deriving a lower bound for the sum of the first and third term appearing on the right-hand side of \eqref{eq:infrared}.

\begin{proof}[Proof of Theorem \ref{theo:theo2}]
We will show that 
\begin{equation} \label{eq:termpositive}
\begin{aligned}
& \liminf\limits_{ \substack{L \rightarrow \infty \\ \mbox{ \footnotesize $L$ even } } } \, \, \bigg[ \G_{L,N,\rho}(\boldsymbol{e}_1)+ \frac{2}{|V_L|} \, \sum\limits_{ \substack{ x  \in V^e_L \,  : \\ x_2 = \ldots  = x_d = 0}} \, \Upsilon_L(x) \mathbb{G}_{L, N, \rho}(x) \bigg] \\
& \qquad \qquad \qquad \geq   \frac{3\pi-4}{2\pi} \, \, \liminf\limits_{ \substack{L \rightarrow \infty \\ \mbox{ \footnotesize $L$ even } } } \, \mathbb{G}_{L, N, \rho}  (  \, \boldsymbol{e}_1  \,   )  + \frac{2}{\pi} \, \liminf\limits_{ \substack{L \rightarrow \infty \\ \mbox{ \footnotesize $L$ even } } } \, \,  \mathbb{G}_{L, N, \rho}(o).
\end{aligned}
\end{equation}

\vspace{-1pt}
Suppose that $L \in 2\N$. Fix $x \in V_L$ such that $x_1 \in 2\N$ and $x_2=\dots=x_d=0$. Some basic calculations, which are given in the appendix,  show that
\begin{equation} \label{eq:sumupsilon}
\Upsilon_L(x) = \begin{cases}
- L^{d-1} \, \cos\big(\frac{\pi \, x_1}{2} \big) \, \cot\big(\frac{\pi}{L} \, (x_1-1)\big) & \text{ if } L \in 4\N, \\
-L^{d-1} \, \cos\big(\frac{\pi \, x_1}{2} \big) \, \csc\big(\frac{\pi}{L} \, (x_1-1)\big) & \text{ if } L \in 2\N \setminus 4\N.
\end{cases}
\end{equation}

We will derive \eqref{eq:termpositive} for $L \in 4\N$, namely $L=4m$ for some $m \in \N$. The case $L \in 2\N \setminus 4\N$ is then similar. 
By \eqref{eq:sumupsilon} we have that, 

\begin{equation} \label{eq:calculation}
\begin{aligned}
& \frac{1}{|V_L|} \, 
\sum\limits_{ \substack{ x  \in V_L^e \,  : 
\\ x_2 = \ldots  = x_d = 0}} \, \Upsilon_L(x) \mathbb{G}_{L, N, \rho}(x) \\
& =  - \frac{1}{4m} \sum\limits_{n=-m+1}^{m}  \, \cos(\pi \,n) \, \cot\big(\frac{\pi}{4m} \, (2n-1)\big) \, \mathbb{G}_{L, N, \rho}\big((2n,0,\dots,0)\big) \\ 
& = \frac{1}{4m} \sum\limits_{\substack{n=1 \\ \text{n odd}}}^{m-1} \mathbb{G}_{L, N, \rho}\big((2n,0,\dots,0)\big) \, \Big( \cot\big(\frac{\pi}{4m} \, (2n-1)\big) - \cot\big(\frac{\pi}{4m} \, (2n+1)\big) \Big) \\
& \qquad - \frac{1}{4m} \sum\limits_{\substack{n=1 \\ \text{n even}}}^{m-1} \mathbb{G}_{L, N, \rho}\big((2n,0,\dots,0)\big) \, \Big( \cot\big(\frac{\pi}{4m} \, (2n-1)\big) - \cot\big(\frac{\pi}{4m} \, (2n+1)\big) \Big) \\
& \qquad + \frac{1}{4m} \mathbb{G}_{L, N, \rho}\big((0,0,\dots,0)\big) \, \cot(\frac{\pi}{4m})- \frac{1}{4m} \cos(m \, \pi) \cot(\frac{\pi}{2}- \frac{\pi}{4m}) \, \mathbb{G}_{L, N, \rho}\big((2m,0,\dots,0)\big),
\end{aligned}
\end{equation}
where we used that $\cot(-x)=-\cot(x)$ for any $x \in \R$ and that $\G_{L,N,\rho}(x)=\G_{L,N,\rho}(-x)$ for any $x \in V_L$.

By \cite[Theorem 2.1]{L-T-quantum} and \cite[Theorem 2.3]{T} it holds that $\mathbb{G}_{L, N, \rho}(x) \leq \mathbb{G}_{L, N, \rho}(\boldsymbol{e}_1) \leq \frac{1}{dN}$ for any $x \in V_L$. The last term of \eqref{eq:calculation} thus vanishes to zero in the limit $m \to \infty$ since $ \cot(\frac{\pi}{2})=0$. The first term is non-negative since $\cot(x)$ is monotonically decreasing on the interval $(0,\pi)$. From these considerations and using that $\lim\limits_{m \to \infty} \frac{1}{4m} \cot(\frac{\pi}{4m})=\frac{1}{\pi}$, we obtain from \eqref{eq:calculation} that
\begin{equation} \label{eq:lowerbound}
\begin{aligned}
& \liminf_{\substack{L \to \infty \\ L \in 4\N}} \, \bigg[ \G_{L,N,\rho}(\boldsymbol{e}_1)+ \frac{2}{|V_L|} \, 
\sum\limits_{ \substack{ x  \in V_L^e \,  : 
\\ x_2 = \ldots  = x_d = 0}} \, \Upsilon_L(x) \, \mathbb{G}_{L, N, \rho}(x) \bigg]\\
& \geq \liminf_{m \to \infty} \big(\, \mathbb{G}_{4m, N, \rho}(\boldsymbol{e}_1) \,  a_m \big) + \frac{2}{\pi} \, \liminf_{\substack{L \to \infty \\ L \in 4\N}} \, \mathbb{G}_{L, N, \rho}(o),
\end{aligned}
\end{equation}
where for $m \in \N$, 
\[
a_m:= 1-\frac{2}{4m} \sum\limits_{n=1 }^{\lfloor \frac{m-1}{2} \rfloor}  \Big( \cot\big(\frac{\pi}{4m} \, (4n-1)\big) - \cot\big(\frac{\pi}{4m} \, (4n+1)\big) \Big).
\]
For $ 0 < |x| < 1$, expansion into Taylor series gives that $\cot(\pi \, x) = \frac{1}{\pi \, x} + \frac{1}{\pi} \, \sum_{l=1}^\infty c_l \, x^{2l-1}$, where $c_l := (-1)^l \, \frac{(2\pi)^{2l} \, B_{2l}}{(2l)!}$ for any $l \in \N$. Here, $(B_n)_{n \geq 0}$ denote the Bernoulli numbers, see e.g. \cite[p. 220]{Remmert}.
Using the binomial theorem and the Leibniz formula for $\pi$ we thus deduce that,
\begin{equation} \label{eq:Laurent}
\begin{aligned}
a_m
& = 1-\frac{2}{\pi} \, \sum\limits_{n=1 }^{\lfloor \frac{m-1}{2} \rfloor}  \Big( \frac{1}{4n-1}-\frac{1}{4n+1}\Big) + \frac{4}{\pi} \, \sum_{l=1}^\infty f_m(l) \\
& \geq  1-\frac{2}{\pi} \, \sum_{n=1}^\infty \frac{(-1)^{n+1}}{2n+1} - \frac{4}{\pi} \, \sum_{l=1}^\infty |f_m(l)| = \frac{3\pi-4}{2\pi} - \frac{4}{\pi} \, \sum_{l=1}^\infty |f_m(l)|,
\end{aligned}
\end{equation}
where for any $m,l \in \N$, 
\[
f_m(l) := c_{l} \, \Big(\frac{1}{4m}\Big)^{2l} \, \sum_{\substack{k=0: \\ k \text{ odd}}}^{2l-1} \binom{2l-1}{k} \, \sum\limits_{n=1 }^{\lfloor \frac{m-1}{2} \rfloor} (4n)^{2l-1-k}.
\]
We will now show that 
\begin{equation} \label{eq:limitf1}
\lim_{m \to \infty} \sum_{l=1}^\infty |f_m(l)|=0.
\end{equation}
For the next calculation we use the upper bound $|B_{2l}| < 4 \, \frac{(2l)!}{(2\pi)^{2l}}$ for any $l \in \N$, see e.g. \cite[p. 332]{Remmert}, which implies that $|c_l| < 4$ for any $l \in \N$. We then have that,
\begin{equation} \label{eq:upperboundsum}
\begin{aligned}
\sum_{l=1}^\infty |f_m(l)| 
& \leq 4 \, \sum_{l=1}^\infty \Big(\frac{1}{4m}\Big)^{2l} \, \frac{m-1}{2} \,  \, \sum_{\substack{k=0: \\ k \text{ odd}}}^{2l-1} \binom{2l-1}{k} \, (2m-2)^{2l-1-k} \\
& \leq \frac{1}{4} \, \sum_{l=1}^\infty \Big(\frac{1}{4m}\Big)^{2l-1} \, \Big[ (2m-1)^{2l-1}-(2m-3)^{2l-1} \Big],  
\end{aligned}
\end{equation}
where in the last step we used that $\sum_{k=0}^n \binom{n}{k} x^{n-k}  \, \mathbbm{1}_{\{k \text{ odd}\}}= \frac{1}{2} \big[(x+1)^n-(x-1)^n\big]$ for any $n \in \N$ and $x \in \R$.
Now for any $m \in \N$,
\begin{equation*}
\Big(\frac{1}{2}-\frac{1}{4m}\Big)^{2l-1}-\Big(\frac{1}{2}-\frac{3}{4m}\Big)^{2l-1}  \leq \Big(\frac{1}{2}\Big)^{2l-1},
\end{equation*}
and $\sum_{l=1}^\infty \big(\frac{1}{2}\big)^{2l-1} < \infty$. We can thus apply reverse Fatou's Lemma and obtain from \eqref{eq:upperboundsum} that
\begin{equation} \label{eq:limitf}
0 \leq \limsup_{m \to \infty} \sum_{l=1}^\infty |f_m(l)| \leq \frac{1}{4} \, \sum_{l=1}^\infty \limsup_{m \to \infty} \bigg[ \Big(\frac{1}{2}-\frac{1}{4m}\Big)^{2l-1}-\Big(\frac{1}{2}-\frac{3}{4m}\Big)^{2l-1} \bigg]=0.
\end{equation}
From \eqref{eq:limitf} we deduce \eqref{eq:limitf1}. In particular, $a_m \geq 0$ for $m$ large enough. These considerations together with \eqref{eq:Laurent} imply that
\begin{equation} \label{eq:am}
\begin{aligned}
& \liminf_{m \to \infty} \,  \big(\, \mathbb{G}_{4m, N, \rho}(\boldsymbol{e}_1) \,  a_m \big) 
& \geq \frac{3\pi-4}{2\pi} \, \liminf_{m \to \infty} \, \mathbb{G}_{4m, N, \rho}(\boldsymbol{e}_1) \, \, .
\end{aligned}
\end{equation}

From \eqref{eq:lowerbound} and \eqref{eq:am} we deduce \eqref{eq:termpositive}.
It remains to provide a lower bound on $\lim_{L \to \infty}\G_{L,N,\rho}(\boldsymbol{e}_1)$. Let $L \in 2\N$, $L>2$. Let $\Omega^\ell$ denote the set of triples $\pi=(M,d_1,d_2)$ such that $M \subset V_L$ and $d_1,d_2 \in D_{\T_L}(M)$. Each triplet can be seen as a system of monomers and disjoint self-avoiding loops. For any $\pi \in \Omega^\ell$, we let $\tilde{\mathcal{L}}(\pi)$ be the number of loops in $\pi$. We define a probability measure on $\Omega^\ell$,
\begin{equation*}
\forall \pi \in \Omega^\ell \qquad \tilde{\P}_{L,N,\rho}(\pi) := \frac{\rho^{|M(\pi)|} \big(\frac{N}{2})^{\tilde{\mathcal{L}}(\pi)}}{\tilde{Z}_{L,N,\rho}},
\end{equation*}
where $\tilde{Z}_{L,N,\rho}$ is a normalization constant. We have the equality (see e.g. \cite{T}), 
\begin{equation} \label{eq:G(e1)bound}
\mathbb{G}_{L, N, \rho}(\boldsymbol{e}_1) = \frac{1}{d \, N } \,  \big(1-\tilde{\P}_{L,N,\rho}(o \text{ is a monomer})\big).
\end{equation}
The probability appearing on the right-hand side of \eqref{eq:G(e1)bound} can be reformulated as a probability in the \textit{random path model} for special choices of the parameters, see e.g. \cite{L-T-first,T}. This reformulation allows the application of the Chessboard Estimate, see e.g. \cite[Proposition 4.5]{T}, namely
\begin{equation} \label{eq:probmonomerbound}
\tilde{\P}_{L,N,\rho}(o \text{ is a monomer}) \leq \tilde{\P}_{L,N,\rho}(\forall x \in V_L, \, x \text{ is a monomer})^{\frac{1}{|V_L|}} \leq \rho,
\end{equation}
where we used that $\tilde{Z}_{L,N,\rho} \geq 1$. Together with \eqref{eq:G(e1)bound} the upper bound \eqref{eq:probmonomerbound} implies that
\begin{equation} \label{eq:Chessboardapplication}
\mathbb{G}_{L, N, \rho}(\boldsymbol{e}_1)  \geq \frac{1}{dN}(1-\rho).
\end{equation}
Combining \eqref{eq:infrared}, \eqref{eq:limitofIL}, \eqref{eq:termpositive} and \eqref{eq:Chessboardapplication} concludes the proof of \eqref{eq:permpositivityaverage}.  Note that the second term on the right-hand side of \eqref{eq:termpositive} is non-negative and is neglected.  From \eqref{eq:permpositivityaverage} and \cite[Theorem 2.2]{L-T-quantum} we further deduce \eqref{eq:eq2theo2}.
For the last statement, equation \eqref{eq:eq3theo2}, note that \eqref{eq:permpositivityaverage} implies the existence of $\tilde{C}<C$ such that for any $\alpha \in (0,1)$ and for any $L \in 2\N$, 
\begin{equation*}
\mathbb{P}_{L, N, \rho} \big (  \, | X  |_1  \, \leq \,  \alpha \, L  \,  \big ) = \frac{\sum\limits_{x \in V_{\lfloor \alpha \, L \rfloor}}\G_{L,N,\rho}(o,x)}{\sum\limits_{x \in V_L} \G_{L,N,\rho}(o,x)} \leq \frac{\alpha^d}{\tilde{C}},
\end{equation*}
where we used that $\G_{L,N,\rho}(x) \leq 1$ for all $x \in V_L$ (see \cite[Theorem 2.3]{T})  in the last step. Choosing $\alpha>0$ small enough and setting $c < \min(\alpha, 1-\frac{\alpha^d}{\tilde{C}})$ proves \eqref{eq:eq3theo2} and concludes the proof of the theorem.
\end{proof}

\appendix

\section{Appendix}
\label{sect:appendix}

\begin{proof}[Proof of \eqref{eq:sumupsilon} in the proof of Theorem \ref{theo:theo2}]
Let $L \in 2\N$ and $x \in V_L$ such that $x_1 \in 2\N$ and $x_2=\dots=x_d=0$. 
We have that
\begin{equation} \label{eq:calculusgeneral}
\Upsilon_L(x) = \begin{cases}
L^{d-1} \, \text{Re}\bigg(
\sum\limits_{k=-\frac{L}{4}+1}^{\frac{L}{4}}    e^{ - i \, \frac{2\pi}{L} \, k \, (x_1-1)}\bigg) & \text{ if } L \in 4\N, \\
 L^{d-1} \, \text{Re}\bigg(\sum\limits_{k=- \frac{L-2}{4}+1}^{\frac{L-2}{4}}    e^{ - i \, \frac{2\pi}{L} \, k \, (x_1-1)}\bigg) & \text{ if } L \in 2\N \setminus 4\N.
\end{cases}
\end{equation}
We now derive \eqref{eq:sumupsilon} for $L \in 4 \N$. The case $L \in 2\N \setminus 4\N$ is then similar. Consider $L=4n$ for some $n \in \N$ and take $x \in \Z$ odd. Using that $\cos(\frac{\pi}{2}x) =0$,  we have that
\begin{equation} \label{eq: calculusupsilonA1}
\begin{aligned}
\text{Re}\bigg(
\sum\limits_{k=-n+1}^n    e^{ - i \, \frac{\pi}{2n} \, k \, x}\bigg) 
& = \text{Re}\bigg( e^{ i \, \frac{\pi}{2} \, x} \, e^{-i \, \frac{\pi}{2n} \, x} 
\sum\limits_{k=0}^{2n-1}    \Big(e^{ - i \frac{\pi}{2n} \, x}\Big)^k\bigg) \\
& = 2 \,\sin(\frac{\pi}{2} \, x) \, \text{Im}\bigg(e^{-i \, \frac{\pi}{2n} \, x} \, \frac{1}{e^{-i \, \frac{\pi}{2n}\, x}-1} \bigg),
\end{aligned}
\end{equation}
where in the last step we used that $\cos(\pi \, x)=-1$ and $\sin(\pi \, x)=0$.
We have that
\begin{equation} \label{eq:calculusupsilonA2}
\begin{aligned}
\text{Im}\bigg(e^{-i \, \frac{\pi}{2n} \, x} \, \frac{1}{e^{-i \, \frac{\pi}{2n}\, x}-1} \bigg)
 = \cos(\frac{\pi}{2n} \, x) \, \text{Im}\bigg(\frac{1}{e^{-i \, \frac{\pi}{2n}\, x}-1} \bigg) - \sin(\frac{\pi}{2n} \, x) \, \text{Re}\bigg(\frac{1}{e^{-i \, \frac{\pi}{2n}\, x}-1} \bigg).
\end{aligned}
\end{equation}
Solving $\frac{1}{e^{-i \, \frac{\pi}{2n}\, x}-1} = a + ib$ for $a$ and $b$ gives $a = -\frac{1}{2}$ and $b= \frac{\sin(\frac{\pi}{2n} \, x )}{2-2 \cos(\frac{\pi}{2n} x )}$. From  \eqref{eq:calculusupsilonA2}, we thus obtain  that
\begin{equation} \label{eq:calculuslast}
\text{Im}\bigg(e^{-i \, \frac{\pi}{2n} \, x} \, \frac{1}{e^{-i \, \frac{\pi}{2n}\, x}-1} \bigg)
= \frac{1}{2} \, \frac{\sin(\frac{\pi}{2n} \, x)}{1-\cos(\frac{\pi}{2n} \, x)} 
 = \frac{1}{2} \, \cot(\frac{\pi}{4n} \, x).
\end{equation}
From \eqref{eq:calculusgeneral}, \eqref{eq: calculusupsilonA1} and \eqref{eq:calculuslast} we deduce \eqref{eq:sumupsilon}.
\end{proof}

\section*{Acknowledgements} The authors thank the German Research Foundation (project number 444084038, priority program SPP2265) for financial support. AQ additionally thanks the German Research Foundation through IRTG 2544 and the German Academic Exchange Service (grant number 57556281) for financial support. 
The authors also thank the editor and the two anonymous  referees for  carefully reviewing the paper. 


\frenchspacing

\end{document}